\newtheorem{theorem}{Theorem}
\newtheorem{lemma}{Lemma}
\begin{document}

\title{Blind-Adaptive Quantizers}

\author{Aman Rishal Chemmala and Satish Mulleti,~\IEEEmembership{Member,~IEEE}
\thanks{The authors are with the Department of Electrical Engineering, Indian Institute of Technology Bombay, Mumbai, India, 400076.\\ Emails: amanrishalch2@gmail.com, mulleti.satish@gmail.com}
}

\markboth{Blind-Adaptive Quantizers}%
{Shell \MakeLowercase{\textit{et al.}}: Greedy Algorithm for Sensor Selection in Heterogeneous Sensor Networks}


\maketitle

\begin{abstract}

Sampling and quantization are crucial in digital signal processing, but quantization introduces errors, particularly due to distribution mismatch between input signals and quantizers. Existing methods to reduce this error require precise knowledge of the input's distribution, which is often unavailable. To address this, we propose a blind and adaptive method that minimizes distribution mismatch without prior knowledge of the input distribution. Our approach uses a nonlinear transformation with amplification and modulo-folding, followed by a uniform quantizer. Theoretical analysis shows that sufficient amplification makes the output distribution of modulo-folding nearly uniform, reducing mismatch across various distributions, including Gaussian, exponential, and uniform. To recover the true quantized samples, we suggest using existing unfolding techniques, which, despite requiring significant oversampling, effectively reduce mismatch and quantization error, offering a favorable trade-off similar to predictive coding strategies.
\end{abstract}

\begin{IEEEkeywords}
Quantization mismatch error, modulo folding, non-uniform quantizers, companding, blind quantization.
\end{IEEEkeywords}

\section{Introduction}
\label{sec:intro}
The digital representation of analog signals is crucial for efficiently processing them on digital platforms. This process typically involves two main steps: (i) Sampling — where the analog signal is captured as discrete values, such as the signal's instantaneous amplitudes at regular intervals \cite{shannon1949communication,nyquist,eldar_2015sampling} or the times when the signal crosses a set threshold \cite{levelcross_kumaresan,lazar2004perfect,lazar2003time}; and (ii) Quantization —which converts these discrete values from potentially infinite precision to a fixed, finite precision by mapping them to a predefined set of levels \cite{jayant1984digital,gersho2012vector}.

Sampling is a reversible process, meaning that the original analog signal can be reconstructed from its discrete representation, provided certain conditions are met. A well-known example of this is the Shannon-Nyquist sampling theorem, which states that bandlimited signals can be perfectly reconstructed from uniform samples taken at the Nyquist rate \cite{shannon1949communication,nyquist}. In contrast, quantization is a lossy process, and reconstructing the original signal from quantized samples always introduces errors. Generally, these errors can be minimized by increasing the number of quantization levels or the sampling rate, which introduces correlation among the samples and reduces the error. However, both approaches increase the system's bit rate, which is often undesirable in practice.

For a given bit depth and sampling rate, the quantization error is influenced by the choice of quantization levels and the probability distribution of the signal's amplitudes. For instance, with a uniform quantizer that has equally spaced levels, if the signal's large amplitude values are rare, most quantization levels go unused. Alternatively, using a non-uniform quantizer, where levels are more densely packed in regions of lower amplitude, reduces the average quantization error because low-amplitude samples, which are more common, have smaller errors, while the less frequent high-amplitude samples may have larger errors. This type of non-uniform quantization is often implemented using a nonlinear circuit known as a compander, followed by a uniform quantizer \cite{haykin2008communication}. However, the optimality of such companders, like $\mu$-law and $A$-law quantizers, is not always guaranteed for a given signal distribution.

Several methods have been proposed to design optimal non-uniform quantizers (see \cite{gray1998quantization} for a detailed review). For example, the Lloyd-Max algorithm \cite{max1960quantizing,lloyd1982least} iteratively determines optimal quantization levels for a given distribution. Although effective, these methods require precise knowledge of the input distribution. Any mismatch between the actual distribution and the one for which the quantizer is optimized can lead to significant errors \cite{gray1975quantizer}.

Another approach to reducing quantization error is oversampling, a technique also known as predictive coding \cite{gray1998quantization}. Oversampling—sampling at a rate higher than the Nyquist rate—introduces redundancy or correlation among samples, enabling better approximation of each sample by its predecessors. Consequently, instead of quantizing the raw samples, quantizing the difference between the true sample and its prediction reduces the error, as the difference typically has a smaller dynamic range than the original signal. This concept underlies methods such as differential pulse code modulation and delta modulation.

A different approach involves reducing the dynamic range using a modulo-folding operation, which enhances the predictability of samples \cite{ericson_moduloPCM,uls_tsp,uls_romonov,ordentlich2018modulo,bhandari2021unlimited,eyar_tsp,romanov2021blind,weiss2022blind}. In this technique, instead of directly quantizing the signal samples, the signal is first folded, and then the folded signal samples are quantized. While many studies focus on sampling and reconstructing from folded samples without considering quantization \cite{uls_romonov,bhandari2021unlimited,eyar_tsp}, others, such as \cite{ericson_moduloPCM,uls_tsp,ordentlich2018modulo,romanov2021blind,weiss2022blind}, examine how modulo-folding affects quantization error. For instance, \cite{ordentlich2018modulo} explores modulo-folding-based quantization and derives error bounds assuming the input signal's second-order statistics are known. This assumption was later relaxed in \cite{romanov2021blind,weiss2022blind}, where the algorithms do not require prior knowledge of input statistics. Despite these advances, whether the distribution of folded samples aligns well with the quantizer remains unclear.

In this paper, we address the issue of distribution mismatch between the input signal and the quantizer. We propose a method to minimize this mismatch without needing knowledge of the input's distribution. This approach, which we term blind and adaptive, is effective across a wide range of distributions. To mitigate the mismatch error, we introduce a nonlinear transformation involving an amplifier and a modulo-folding operation. This transformation precedes a uniform quantizer. Theoretically, we demonstrate that with sufficiently large amplification, the output distribution of the modulo-folding operation approaches uniformity, aligning closely with the quantizer's characteristics. This reduction in mismatch is achieved independently of the input's distribution. In particular, we showed that scaling with modulo results in a uniform distribution for commonly used distributions, such as Gaussian, exponential, and uniform input. Due to the many-to-one nature of modulo-folding, we suggest employing the existing unfolding methods \cite{uls_tsp} to recover the true quantized samples. The unfolding requires considerable oversampling, which is the cost of reduced mismatch and lower quantization error, as in most predictive coding schemes. Although unfolding necessitates significant oversampling, this trade-off yields decreased mismatch and lower quantization error, akin to predictive coding methods.


In the next section, we formulate the problem. The proposed solution is discussed in Section III, followed by conclusions.

\section{Problem Formulation}
    
Consider a continuous-time finite energy bandlimited signal $x(t)$ whose Fourier transform vanishes outside the frequency interval $[-\omega_M, \omega_M]$. The signal can be sampled at a rate $\omega_s =  \frac{2\pi}{T_s} \geq 2\omega_M$ to have discrete measurements $\{x[n] = x(nT_s)\}_{n \in \mathbb{Z}}$ and can be reconstructed back as
    \begin{equation}
        \label{eq:Shannon_Reconstruction}
        x(t) = \sum_{n=-\infty}^{\infty}x[n]\, \frac{\sin(\omega_M(t-nT_s))}{\omega_M(t-nT_s)}.
    \end{equation}
Post-sampling, the samples are quantized. To this end, consider an $N$-level quantizer $\mathcal{Q}(\cdot)$ defined as
\begin{align}
    \mathcal{Q}(x[n]) = b_i, \quad \text{for } a_{i} < x[n] \leq a_{i+1},
    \label{eq:quantizerQ}
\end{align}
where $\{a_i\}_{i=0}^{N+1}$ are the decision levels and $\{b_i\}_{i=0}^{N}$ are the corresponding representation levels. The step sizes $\{\Delta_i\}_{i=0}^N$ are given by $\Delta_i = a_{i+1} - a_i$.


Quantization always results in error, and the normalized mean square error (NMSE) is one of the standard metrics for quantifying the error. The NMSE is measured as
    \begin{equation}
        \label{eq:NMSE_quantization}
        {\epsilon}_{\mathcal{Q}}(x) = \frac{\sum_{n=-\infty}^{\infty}\left| x[n] - Q(x[n]) \right|^2}{\sum_{n=-\infty}^{\infty}\left|x[n]\right|^2 }.
    \end{equation}
The measure is typically used for deterministic signals. A more useful measure is one that considers the amplitude distribution of the samples, such as
\begin{equation}
    E_{\mathcal{Q}}^{(r)}(x) = \int_{-\infty}^{\infty}\left| u-\mathcal{Q}(u) \right|^r f_X(u) du,
    \end{equation}
where $f_X(u)$ is the probability density function (PDF) of the samples, and $r$ is a positive integer. Note that the error is a function of the representation levels, the decision levels of the quantizer, and the amplitude distribution. The Lloyd-Max algorithm \cite{max1960quantizing,lloyd1982least} estimates the representation levels such that the error is minimized for $r=2$ and a given distribution.

When the quantizer is optimized for a given distribution, but the signal to be quantized follows another distribution, the error is higher. To elaborate on this point, consider two random variables, $X$ and $Y$, with different distributions. If an arbitrary quantizer $\mathcal{Q}$ is used to quantize the samples from the distributions, the corresponding errors are related as \cite{gray1975quantizer} 
\begin{equation}
    \label{eq:bound_on_diff_in_distortion}
    \left| E_{\mathcal{Q}}^{(r)}(X)^{\frac{1}{r}} - E_{\mathcal{Q}}^{(r)}(Y)^{\frac{1}{r}} \right| \leq W_r(X,Y),
\end{equation}
where $W_r(X,Y)$ is the $r$-th Wasserstein distance between random variables $X$ and $Y$ \cite[Ch 2]{panaretos2020wasserstein}. The distance is measured as 
$W_r(X,Y) = \left( \int_0^1 \left| F_X^{-1}(u) - F_Y^{-1}(u) \right|^r du \right)^{\frac{1}{r}},$
where $F_X^{-1}$ and $F_Y^{-1}$ are the inverse cumulative distribution functions (quantile functions) of $X$ and $Y$, respectively. 


Next, consider the quantizer $\mathcal{Q}_Y$ that is optimized for the PDF of \(Y\), then it follows that $E_{\mathcal{Q}_Y}^{(r)}(X)^{\frac{1}{r}} \geq    E_{\mathcal{Q}_Y}^{(r)}(Y)^{\frac{1}{r}}$. Specifically, from  \eqref{eq:bound_on_diff_in_distortion}, we have that 
\begin{equation}
    \label{eq:bound_on_distortion}
    E_{\mathcal{Q}_Y}^{(r)}(X) \leq  \left( E_{\mathcal{Q}_Y}^{(r)}(Y)^{\frac{1}{r}} + W_r(X,Y) \right)^{r}.
\end{equation}
The inequality shows that the farther the distributions are from each other, with respect to the Wasserstein distance measure, the larger the quantization error due to the mismatch. To get a sense of the extent of error due to the mismatches, we consider some examples of common distribution. In the following, we always consider $r=2$.


Consider a uniform quantizer with \( N \) levels within the range \(-\frac{1}{2}\) to \(\frac{1}{2}\). Let \( Y \) be a random variable uniformly distributed in this range, and \( X \) be a Gaussian random variable with zero mean and standard deviation \(\sigma\). The quantization error for \( Y \), denoted by \( E_q^{(2)}(Y) \), is given by
    \begin{equation}
    \label{eq:error_uniformdist_uniformquantizer}
        E_{\mathcal{Q}_Y}^{(2)}(Y) = \frac{\Delta^2}{12} = \frac{1}{12N^2}.
    \end{equation}
To find the bound in mismatch error, $W_2(X, Y)$ is given as $\sigma^2 + \frac{1}{12} - \frac{\sigma}{\sqrt{\pi}}$ \cite{gray1975quantizer}. By using the distance, the error bound is given as
    \begin{equation}
        E_{\mathcal{Q}_Y}^{(2)}(X) \leq \sigma^2 + \frac{1}{12} - \frac{\sigma}{\sqrt{\pi}} + \frac{1}{12N^2} + \frac{2\sqrt{\sigma^2 + \frac{1}{12} - \frac{\sigma}{\sqrt{\pi}}}}{N\sqrt{12}}.
    \end{equation}
For small values of $\sigma$ compared to the uniform quantizer's dynamic range, the signal's values do not cover all the representation levels and, hence, result in a large error. On the other hand, for large $\sigma$, several signal samples fall outside the quantizer's dynamic range and hence are clipped. This again results in a large error. 



Since the error is a function of the  Wasserstein distance, Table~\ref{tab:Quantization_Wasserstein}, summarizes the distance for commonly used distributions (cf. Appendix~\ref{app:SWD}). To further analyze this issue, we simulate two quantizers with range $[-5,5]$, $\mathcal{Q}_{\mathcal{U}}$ and $\mathcal{Q}_\mathcal{N}$. Quantizer $\mathcal{Q}_{\mathcal{U}}$ is a uniform quantizer while $\mathcal{Q}_{\mathcal{N}}$ is a quantizer for the distribution $\mathcal{N}(0, 1)$ \footnote{$\mathcal{N}(\mu, \sigma)$ denotes Gaussian distribution with zero mean and variance $\sigma^2$. $\mathcal{U}[-\lambda, \lambda]$ denotes uniform distribution over the interval $[-\lambda, \lambda]$.}. The quantization errors are listed in Table \ref{tab:simulation_mismatch}. We infer that the mismatch cannot be overlooked. It is unrealistic to expect the optimized quantizers to perform well on signals that differ significantly from the distribution for which they were designed.

One of the simplest ways to bridge the mismatch is to use a transformation that converts $X$'s distribution to that of $Y$. Specifically, a transformation $G(\cdot)$ should be designed such that $G(X) \sim f_Y$. For example, a transformation that can convert an input signal with a CDF $F_X$ to match a desired signal CDF $ F_Y $ is the inverse quantile method \cite[Ch 7]{edition2002probability} given as 
\begin{align}
    G(X) = F_Y^{-1}(F_X).
\end{align}
This approach solves the problem; however, it requires the precise knowledge of $X$. Moreover, $G$ has to be modified when $X$ changes.

The problem we considered in this work is mitigating the mismatch error that does not require knowledge of $f_X$. We will discuss a solution in the following section.


\begin{table}[!t]
\vspace{-1.0em}
  \begin{center}
    \caption{Table of bounds on quantization errors between various probability distributions}
    \label{tab:Quantization_Wasserstein}
    \renewcommand{\arraystretch}{1}
    \vspace{0.5em}
    \begin{tabular}{c c c} 
    \toprule[1.5pt]
      {\(f_X\)} & {\(f_Y\)} & \(  W_2(X,Y) \)\\
      \midrule
      \(\mathcal{N}(\mu_1, \sigma_1)\) & \(\mathcal{N}(\mu_2, \sigma_2)\) &  \(\sqrt{(\mu_2 - \mu_1)^2 + (\sigma_2 - \sigma_1)^2}\)\\
      \(\exp(p_1)\) & \(\exp(p_2)\) & \(\left| \frac{1}{p_1} - \frac{1}{p_2} \right|  \) \\
          \(\mathcal{U}[0,C]\) & \(\exp(p)\) & \(\sqrt{ \frac{2}{p^2} -\frac{3C}{2p} + \frac{C^2}{3}}\)\\
    \bottomrule[1.5pt]
    \end{tabular}
  \end{center}
\vspace{-1.0em}  
\end{table}
\begin{table}[t]
\vspace{-1.0em}
  \begin{center}
    \caption{Quantization error (NMSE) for \(\mathcal{Q}_{\mathcal{N}}\) and \(\mathcal{Q}_{\mathcal{U}}\) with different probability distributions}
    \renewcommand{\arraystretch}{1}
    \label{tab:simulation_mismatch}
    \vspace{0.5em}
    \begin{tabular}{c l c c} 
    \toprule[1.5pt]
      \textbf{Distribution} & \textbf{Parameters} & {\(\epsilon_{{\mathcal{Q}_{\mathcal{N}}}}\)} in dB & \(\epsilon_{\mathcal{Q}_{\mathcal{U}}}\) in dB\\
      \midrule
      \multirow{3}{*}{Normal} & \(\mu\) = 0, \(\sigma\) = 1 & -43.9 & -38.6\\ %
      & \(\mu\) = 0, \(\sigma\) = 0.5 & -39.8 & -32.2\\ %
      & \(\mu\) = 0, \(\sigma\) = 2 & -23.3 & -26.2\\
      \midrule
      \multirow{3}{*}{Uniform} & \(a\) = -5, \(b\) = 5 & -20.0 & -48.1\\
      & \(a\) = -3, \(b\) = 3 & -44.8 & -43.6\\ 
      & \(a\) = -1, \(b\) = 1 & -41.0 & -33.7\\
      \midrule
      \multirow{3}{*}{Exponential} & \(p\) = 1 & -31.4 & -21.8\\
      & \(p\)  = 0.5 & -38.1 & -34.5\\
      & \(p\)  = 2 & -10.0 & -10.9\\
      \midrule
      \multirow{3}{*}{Lognormal} & \(\mu\) = 0, \(\sigma\) = 1 & -6.08 & -6.43\\ %
      & \(\mu\) = 0, \(\sigma\) = 0.5 & -30.7 & -32.5\\ %
      & \(\mu\) = 0, \(\sigma\) = 2 & -0.09 & -0.09\\
    \bottomrule[1.5pt]
    \end{tabular}
  \end{center}
\vspace{-1.0em}  
\end{table}
\noindent

\section{Proposed Blind-Quantizer}
In this section, we discuss the proposed approach that would overcome the shortcomings of the existing methods. We consider samples $x[n]$ of a bandlimited signal $x(t)$ where the sampling rate is greater than the Nyquist rate. The PDF of the samples is $f_X(x)$. Next, we consider a $N$-level quantizer optimized for a random variable $Y$, uniformly distributed between $[-\lambda, \lambda]$ for a known $\lambda>0$. The quantizer is denoted as $\mathcal{Q}_{Y,\lambda}(\cdot)$. We discuss a non-linear transformation that will modify the random variable $X$ such that the distribution matches that of $Y$. To this end, we consider a scaling operation followed by a modulo operation as the transformer. Specifically, the transformer is defined as
\begin{align}
    x_{a,\lambda}[n] = \mathcal{M}_{a,\lambda}(x[n])
     = (a \,x[n]+\lambda)\,\, \text{mod}\,\, 2\lambda -\lambda.
    \label{eq:scale_modulo}
\end{align}
where $a>0$ is the scaling parameter. Note that $x_{a,\lambda}[n] \in [-\lambda, \lambda]$.

We show that, for a fixed $\lambda$ and a sufficiently large $a$, PDF of $ x_{a,\lambda}$ is approximately uniform in the interval $[-\lambda, \lambda]$ for different distributions $f_X(x)$. Hence, $\mathcal{Q}_{Y,\lambda}(\cdot)$ is an optimal quantizer. Before proceeding further, we first discuss the process of recovering the samples $x[n]$ from the quantized version of $x_{a,\lambda}[n]$.

Unlike the conventional transformation used prior to the quantization \cite{gray1998quantization}, the folding operation is not one-to-one. Hence, correlation among the samples is used to estimate $x[n]$ from the quantized samples, as in predictive coding schemes. In \cite{uls_tsp,uls_romonov,eyar_tsp}, the authors have proposed various algorithms for unfolding, that is, to estimate $x[n]$ from $x_{a,\lambda}[n]$. These algorithms consider sampling above the Nyquist rate and then use redundancy among the samples for unfolding. Further, in \cite{uls_tsp}, the authors have shown that with sufficient oversampling, one can estimate $x[n]$ from the quantized and folded samples. We shall rely on these algorithms for the unfolding process and focus on the distribution of the folded samples.


\subsection{PDF of Modulo-Folded Random Variable}
In the following, we derive a random variable's PDF when modulo folded. For the random variable $X$ with PDF $f_X$, let the PDF of the folded random variable $X_{a, \lambda}$ be denoted as $f_{X_{a,\lambda}}$. Then, the cumulative distribution of $X_{a, \lambda}$ is given as
\begin{align}
    \mathrm{Pr}(X_{a, \lambda} \leq \theta)  = \mathrm{Pr} \left( \mathcal{M}_{a,\lambda}(X) \leq \theta\right), \quad \theta \in [-\lambda, \lambda).
    \label{eq:Pr_X_mod_definition}
\end{align}

By using the modulo properties, \eqref{eq:Pr_X_mod_definition} is written as
\begin{equation}
    \label{eq:Pr_X_mod_simplified}
   \mathrm{Pr}(X_{a, \lambda} \leq \theta) = \sum_{m=-\infty}^{\infty} 
    \hspace{-0.1in}\mathrm{Pr} \left( (2m-1)\frac{\lambda}{a} \leq X \leq \frac{2m\lambda}{a} + \frac{\theta}{a} \right),  \quad \theta \in [-\lambda, \lambda). \nonumber
\end{equation}
Expressing this in terms of the cumulative distribution functions (CDF) \(F_X\) for \(X\) and $F_{X_{a, \lambda}}$ for $ X_{a, \lambda} $, we obtain
\begin{align}
    \label{eq:cdf_X_mod}
    F_{X_{a, \lambda}} (\theta) = \sum_{m=-\infty}^{\infty} 
    {F}_{X}\left( \frac{2m\lambda}{a} + \frac{\theta}{a} \right) - {F}_{X}\left((2m-1)\frac{\lambda}{a}\right), 
\end{align}
where $\theta \in [-\lambda, \lambda)$. For $\theta < -\lambda$, we have $F_{X_{a, \lambda}}(\theta) = 0$, and for $\theta \geq \lambda$, we have $F_{X_{a, \lambda}}(\theta) = 1$.
By applying derivative $f_{X_{a, \lambda}} (\theta)$ is given as
\begin{align}
    \frac{\mathrm{d}}{\mathrm{d}\theta} \left( \sum_{m=-\infty}^{\infty} 
    {F}_{X}\left(\frac{2m\lambda}{a} + \frac{\theta}{a} \right) - {F}_{X}\left((2m-1)\frac{\lambda}{a}\right) \right).\nonumber
\end{align}
The conditions for interchanging the summation and derivative are outlined in \cite[Ch 7]{rudin1964principles}. These conditions are satisfied by well-behaved distributions (cf. Appendix~\ref{sec:interchanging _sum_and_derivative} for the proof). Specifically, all distributions for which $f_X(\theta) <\infty$ and $f_X(\theta)$ tend to zero as $|\theta| \rightarrow \infty$. All the distributions considered in this paper, including Gaussian, exponential, and uniform, are well-behaved.

Thus, for well-behaved distributions, we have
\begin{align}
    \label{eq:pdf_X_mod}
    f_{X_{a, \lambda}} (\theta) = \frac{1}{a} \sum_{m=-\infty}^{\infty} 
    {f}_{X}\left(\frac{2m\lambda}{a} + \frac{\theta}{a}\right).
\end{align}

We show that for a fixed $\lambda$ and sufficiently large $a$ compared to the standard deviation of $X$, the modulo-folded random variable approximately follows a uniform distribution. We need a metric to compare two distributions to quantify this approximation. Although the second Wasserstein distance has been used to quantify mismatch error, it involves the inverse CDF, which is difficult to compute. In contrast, the first Wasserstein distance has a solution based on the CDF, which can be easily calculated using \eqref{eq:cdf_X_mod}. Therefore, we will use the first Wasserstein distance $ W_1(X, Y) $, also known as the Earth Mover's Distance, given by
\begin{equation}
W_1(X, Y) = \int_{-\infty}^{\infty} |F_X(\theta) - F_Y(\theta)| \, d\theta. \label{eq:EMD}
\end{equation}
In the following, we will examine how modulo-folding affects several commonly used PDFs. 

\subsection{Gaussian Distribution}
Let $X$ follows a Gaussian distribution, $f_X(\theta)=\frac{1}{\sqrt{2\pi \sigma^2}}e^{-\frac{1}{2}\left( \frac{\theta - \mu}{\sigma}\right)^2}$.
For such a random variable $X$, by using \eqref{eq:pdf_X_mod}, the PDF of the modulo folded variable \({f}_{X_{a,\lambda}}\) is given as
\begin{align}
    \label{eq:f_mod_gaussian}
    f_{X_{a, \lambda}} (\theta) = \sum_{m=-\infty}^{\infty}\frac{1}{a\sigma\sqrt{2\pi}} 
  e^{\left( -\frac{1}{2}\frac{(2m\lambda+\theta)-a\mu}{a\sigma}\right)^{\!2}\,}.
\end{align}
 In Appendix~\ref{subsec:modulo_folded_gaussian_dist}, we showed that as $a$ increases, the quantity $\frac{\lambda}{a\sigma}$ decreases, and consequently the PDF of $f_{X_{a,\lambda}}$ tends towards $\mathcal{U}[-\lambda, \lambda]$. In Fig.~\ref{fig:f_X_lambda_gaussian}, we plotted the $f_{X_{a, \lambda}} (\theta)$ (cf. \eqref{eq:f_mod_gaussian}) for $\lambda = 1$, $\mu = 0$, and $\sigma=1$. As $a$ increases, the curve increasingly resembles a straight line parallel to the $y$-axis and hence, approximates the uniform distribution.
 
\begin{figure}[!t]
    \centering
    \includegraphics[width=2.5in]{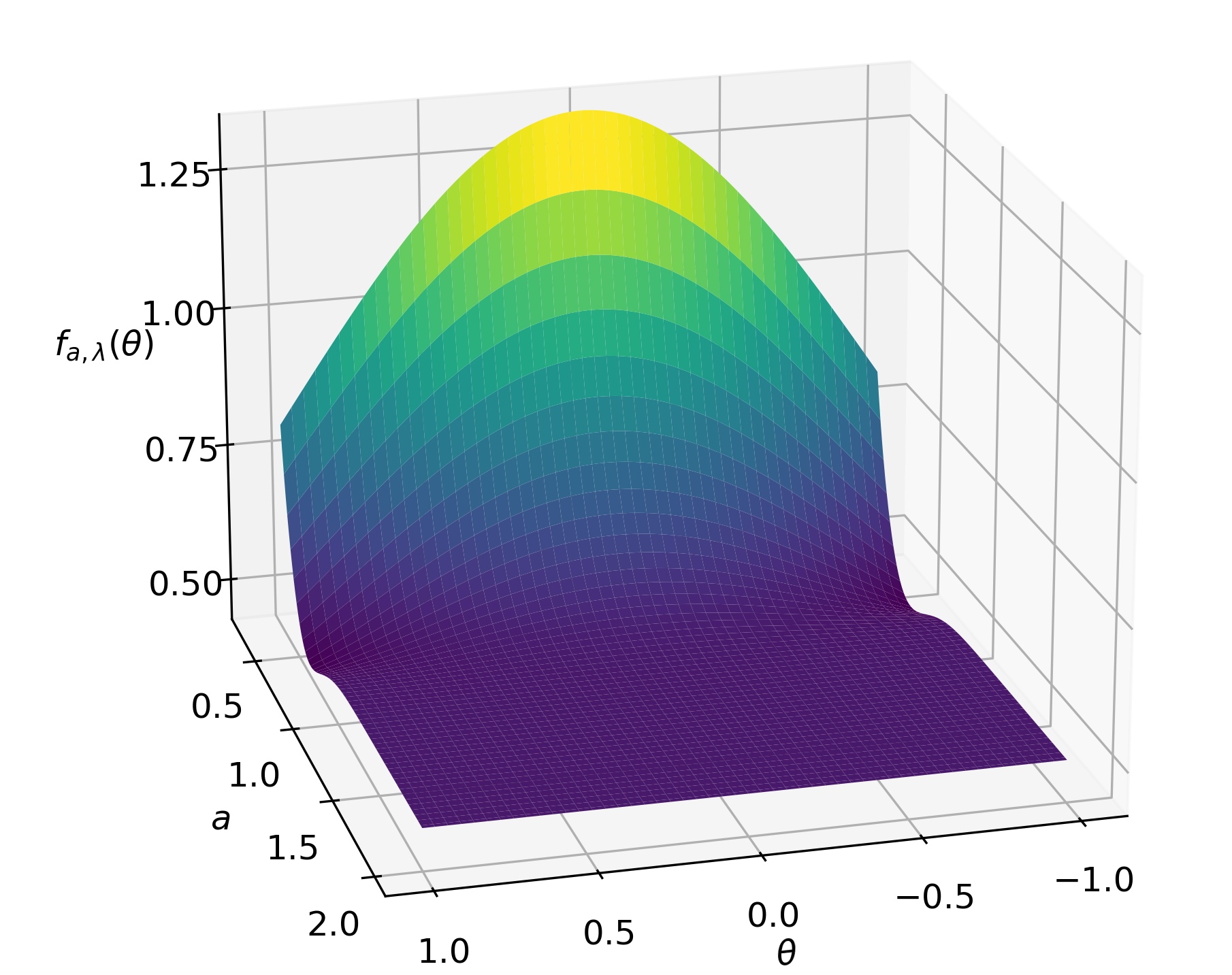}
    \caption{Variation of the PDF $f_{a,X_{\lambda}}(\theta)$ with $a$ and $\theta$ where $X \sim \mathcal{N}[0,1]$.}
    \label{fig:f_X_lambda_gaussian}
\end{figure}
Next, we quantify how well the modulo-folded Gaussian variable approximates a uniformly distributed signal. In Fig.~\ref{fig:vb_example} we showed $W_1$ (see \eqref{eq:EMD}) for different values of $a$ as $\mu$ and $\sigma$ varies. We note that for a given fixed value of $\mu$ and $\sigma$, the approximation error decreases as $a$ increases.

\begin{figure}[!t]
    \centering
    \subfigure[$\mu = 0$]{\includegraphics[width = 3.2 in]{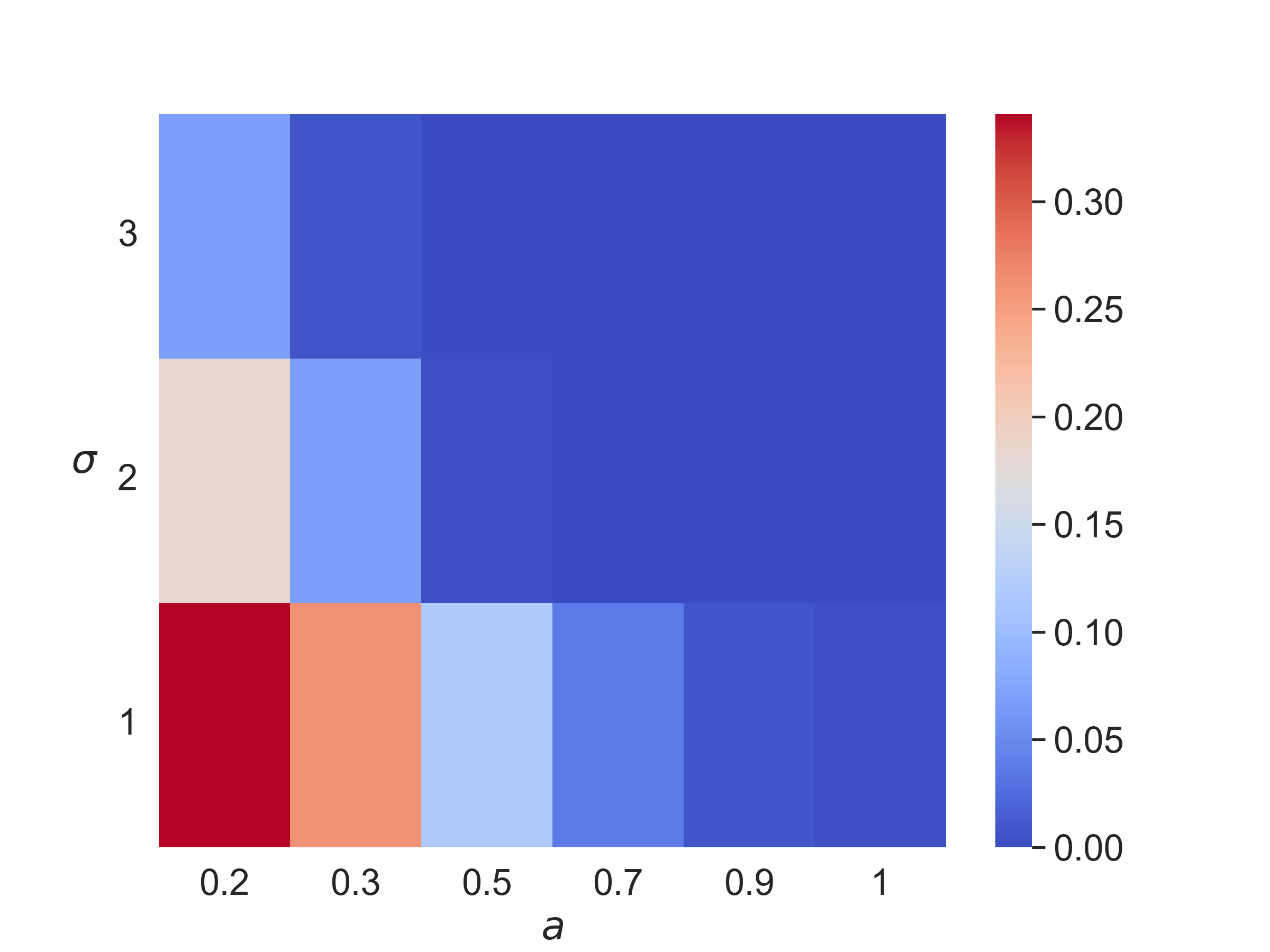}
    \label{fig:heatmap_wasserstein_gaussian_SD_a}
    }
    \subfigure[$\sigma = 1$]{\includegraphics[width = 3.2 in]{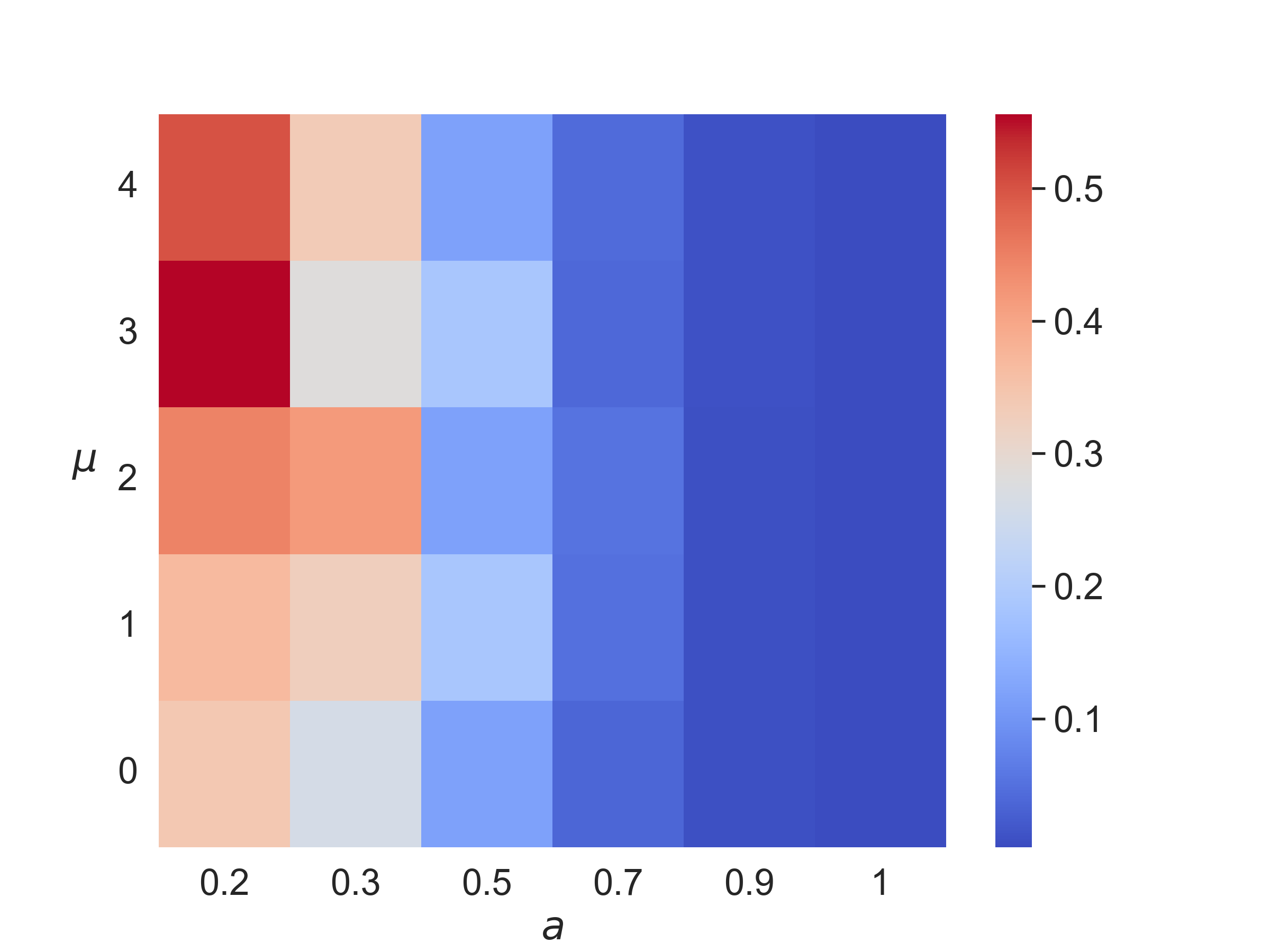}
           \label{fig:heatmap_wasserstein_gaussian_mean_a}
           }
    \caption{The Earth Mover's Distance between a family of Gaussian random variables and $\mathcal{U}[-\lambda,\lambda]$ for $\lambda = 1$. The error decreases as $a$ increases.}
    \label{fig:vb_example}
\end{figure}

\subsection{Exponential Distribution}
An exponential random variable $X$ with rate parameter $p > 0$ has a CDF given by $F_X(x) = 1- e^{-px}$ for $x \geq 0$ and $F_X(x) = 0$ for $x < 0$. In this case, the CDF of $F_{X_{a,\lambda}}$ can be found using \eqref{eq:cdf_X_mod} as
\begin{align}
       F_{X_{a, \lambda}} (\theta) = \begin{cases}
        \frac{e^{\frac{p}{a}\lambda} - e^{-\frac{p}{a}\theta}}{e^{2\frac{p}{a}\lambda}-1} + 1 - e^{-\frac{p}{a}\theta}, &  0 \leq \theta,\\
        \frac{e^{\frac{p}{a}\lambda} - e^{-\frac{p}{a}\theta}}{e^{2\frac{p}{a}\lambda}-1}, & \theta < 0.
    \end{cases}
\end{align}
The derivation details are presented in Appendix~\ref{subsec:modulo_folded_exponential_dist}.
Further, in the appendix, we showed that for large $a$ or small $\frac{p\, \lambda}{a}$, PDF of $X_{a, \lambda}$ converges $\mathcal{U}[- \lambda, \lambda]$.  
Note that $\frac{1}{p}$ represents the standard deviation of the exponential distribution. Thus, the convergence results are analogous to the Gaussian case in the sense that the approximation improves as $a$ increases compared to the standard deviation of the random variable.


\subsection{Uniform Distribution}
Here, we consider a random variable $X$ with a uniform distribution. It may seem irrational to consider a uniform distribution as the quantizer itself is optimized for a uniform distribution. However, as discussed previously, the mismatch could still affect the quantization performance significantly. We assume that $X \sim \mathcal{U}[\theta_1,\theta_2]$. In this case, the PDF of $X_{a,\lambda}$ is given as (see Appendix~\ref{subsec:modulo_folded_uniform_dist})
\begin{align}
    \label{eq:f_X_mod_uniform}
    f_{X_{a, \lambda}}(\theta) = \begin{cases}
        \beta \, {\mathbbm{1}}_{[\theta_{1\lambda},\theta_{2\lambda}]}(\theta), & \left\lfloor \frac{a\theta_2 + \lambda}{2\lambda} \right\rfloor = \left\lfloor \frac{a\theta_1 + \lambda}{2\lambda} \right\rfloor,\\
        \frac{\beta}{a} \left( {\mathbbm{1}}_{[\theta_{1\lambda},\lambda)}(\theta) + {\mathbbm{1}}_{[-\lambda,\theta_{2\lambda}]}(\theta) + \left\lfloor \frac{a\theta_2 + \lambda}{2\lambda} \right\rfloor - \left\lfloor \frac{a\theta_1 + \lambda}{2\lambda} \right\rfloor - 1 \right), & \left\lfloor \frac{a\theta_2 + \lambda}{2\lambda} \right\rfloor \neq \left\lfloor \frac{a\theta_1 + \lambda}{2\lambda} \right\rfloor,
    \end{cases} 
\end{align} 
where $\beta = \frac{1}{\theta_2 - \theta_1}$, $\theta_{1\lambda} = \mathcal{M}_{a,\lambda}(\theta_1)$, $\theta_{2\lambda} = \mathcal{M}_{a,\lambda}(\theta_2)$ and $\mathbbm{1}_{A}(x)$ is the indicator function defined as
\begin{align}
    \label{eq:indicator_definition_main}
     {\mathbbm{1}}_{A}(x) = \begin{cases}
        1, & \quad x \in A, \\
        0, & \quad \text{otherwise}.
    \end{cases}
\end{align}  
Under the condition of a large $\frac{a(\theta_1 - \theta_2)}{\lambda}$, we showed that $f_{X_{a, \lambda}}$ (cf. Appendix~\ref{subsec:modulo_folded_uniform_dist}), as described in \eqref{eq:f_X_mod_uniform}, approaches the density of a uniform distribution over $[-\lambda, \lambda]$. Note that the standard deviation of a random variable $X \sim \mathcal{U}[\theta_1, \theta_2]$ is given by $\frac{1}{\theta_1 - \theta_2}$. Thus, as observed with the Gaussian and Exponential distributions, the approximation improves as \(a\) increases relative to the standard deviation of the distribution.


In a nutshell, we showed that for the Gaussian, exponential, and uniform distributions, as the amplitude factor increases compared to the standard deviations of the distribution, the folded variables' distributions match that of the quantizer. Hence, the issue of quantizer mismatch is eradicated. We also demonstrated that, among quantizers with the same amplitude factor, the folded variable best approximates the quantizer distribution when the quantizer has a smaller folding parameter.

\section{Conclusion}
In this work, we proposed an approach to reduce the quantization mismatch error. Our proposed blind and adaptive method effectively minimizes distribution mismatch between input signals and quantizers without requiring prior knowledge of the input distribution. This approach is particularly successful with Gaussian, exponential, and uniform inputs. Although modulo-folding's many-to-one nature necessitates using existing unfolding techniques and considerable oversampling, the substantial reduction in mismatch and quantization error offers a worthwhile trade-off akin to those found in predictive coding strategies.

\appendices
\section{PDF of Modulo Folded Random Variable}
\label{sec:interchanging _sum_and_derivative}
\begin{theorem}
Consider a random variable $X$ with CDF $F_X$ and PDF $f_X$. Let ${X_{a,\lambda}}(\theta)$ represent the modulo-folded random variable defined by
\begin{align}
x_{a,\lambda}[n] = \mathcal{M}_{a,\lambda}(x[n]) = (a \,x[n] + \lambda) \,\, \text{mod}\,\, 2\lambda - \lambda.
\label{eq:scale_modulo_appendix}
\end{align}
This variable has a CDF ${F}_{X_{a,\lambda}}(\theta)$ defined for $\theta \in (-\lambda, \lambda)$. If the function $f_X(x)$ is continuous, approaches zero as $|x| \rightarrow \infty$, and is bounded, then for $\theta \in (-\lambda, \lambda)$, the following equation holds:
\begin{align}
\frac{\mathrm{d}}{\mathrm{d}\theta} {F}_{X_{a,\lambda}}(\theta) = \frac{1}{a}\sum_{m=-\infty}^{\infty} f_X \left( \frac{2m\lambda}{a} + \frac{\theta}{a} \right).
\end{align}
\end{theorem}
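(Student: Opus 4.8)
The plan is to obtain the identity by differentiating the series for $F_{X_{a,\lambda}}$ in \eqref{eq:cdf_X_mod} term by term, which is legitimate once one checks the hypotheses of the standard term-by-term differentiation theorem for series of functions \cite[Ch.~7]{rudin1964principles}. Write, for $\theta\in(-\lambda,\lambda)$,
\[
F_{X_{a,\lambda}}(\theta)=\sum_{m=-\infty}^{\infty} g_m(\theta),\qquad g_m(\theta):=F_X\!\Big(\tfrac{2m\lambda}{a}+\tfrac{\theta}{a}\Big)-F_X\!\Big((2m-1)\tfrac{\lambda}{a}\Big).
\]
Since $f_X$ is continuous, $F_X$ is continuously differentiable with $F_X'=f_X$, so each $g_m$ is $C^1$ and, by the chain rule, $g_m'(\theta)=\tfrac1a f_X\!\big(\tfrac{2m\lambda+\theta}{a}\big)$. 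The differentiation theorem then requires only two inputs: (i) $\sum_m g_m$ converges at one point of $(-\lambda,\lambda)$ — which is immediate, since its partial sums equal the $[0,1]$-valued CDF $F_{X_{a,\lambda}}(\theta)$ at every $\theta$; and (ii) $\sum_m g_m'$ converges uniformly on $(-\lambda,\lambda)$. Granting (i)--(ii), the theorem gives $\frac{\mathrm d}{\mathrm d\theta}F_{X_{a,\lambda}}(\theta)=\sum_m g_m'(\theta)=\frac1a\sum_m f_X\!\big(\frac{2m\lambda+\theta}{a}\big)$, which is exactly the claim (applied on each $[-\lambda+\delta,\lambda-\delta]$ and letting $\delta\downarrow0$ to cover all of $(-\lambda,\lambda)$).

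The real work is (ii), which I would prove with the Weierstrass $M$-test. For $\theta\in(-\lambda,\lambda)$ the argument $\tfrac{2m\lambda+\theta}{a}$ lies in $I_m:=\big(\tfrac{(2m-1)\lambda}{a},\tfrac{(2m+1)\lambda}{a}\big)$, so $\sup_{\theta\in(-\lambda,\lambda)}|g_m'(\theta)|\le \tfrac1a M_m$ with $M_m:=\sup_{u\in\overline{I_m}}f_X(u)$, finite by boundedness and continuity of $f_X$. The closed intervals $\{\overline{I_m}\}_{m\in\mathbb{Z}}$ have disjoint interiors, cover $\mathbb{R}$, and each has length $2\lambda/a$. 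It remains to show $\sum_m M_m<\infty$: for the finitely many $m$ whose $\overline{I_m}$ meets a fixed bounded set the terms are trivially summable ($M_m\le\|f_X\|_\infty$), while in each tail, where $f_X$ is monotone (true for the Gaussian, exponential, and uniform densities treated in the paper), $M_m$ is attained at the endpoint of $\overline{I_m}$ nearer the origin, and comparing the resulting tail sum with $\tfrac{a}{2\lambda}\int f_X\le\tfrac{a}{2\lambda}$ bounds it. Hence $\sum_m g_m'$ converges uniformly (and absolutely) on $(-\lambda,\lambda)$, giving (ii).

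I expect the estimate $\sum_m M_m<\infty$ to be the main obstacle, and the point where the hypotheses on $f_X$ are genuinely used: pointwise decay $f_X(x)\to0$ alone controls each individual $M_m$ but not their sum, so one must lean on integrability of $f_X$ together with eventual monotonicity of its tails — i.e., on $f_X$ being ``well-behaved'' in the paper's sense. For a density lacking monotone tails one can instead bound each tail $M_m$ by $\tfrac{a}{2\lambda}\int_{\overline{I_m}}f_X$ plus an oscillation term controlled by a modulus of continuity of $f_X$ over $\overline{I_m}$, and sum those; in either case the remaining ingredients — the chain-rule formula for $g_m'$, the one-point convergence in (i), and the bookkeeping of the intervals $I_m$ — are routine.
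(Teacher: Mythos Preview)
Your proposal is correct and follows the same overall architecture as the paper: both invoke the term-by-term differentiation theorem from \cite[Ch.~7]{rudin1964principles}, both note that the series $\sum_m g_m(\theta)$ converges (being a CDF), and both reduce the matter to uniform convergence of $\sum_m g_m'(\theta)=\tfrac1a\sum_m f_X\big(\tfrac{2m\lambda+\theta}{a}\big)$. The execution differs only in how that uniform convergence is established. The paper argues via the Cauchy criterion: it uses the integral test on $h(\tau)=f_X\big(\tfrac{2\tau\lambda+\theta}{a}\big)$ to conclude the tail sums $\sum_{m>n}h(m)$ tend to zero. You instead use the Weierstrass $M$-test with $M_m=\sup_{\overline{I_m}}f_X$, comparing $\sum_m M_m$ to $\tfrac{a}{2\lambda}\int f_X$ in the monotone tails. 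Your route has the advantage that uniformity in $\theta$ is automatic (the $M_m$ do not depend on $\theta$), whereas the paper's Cauchy-criterion argument establishes convergence for each fixed $\theta$ without explicitly checking that the threshold $N$ can be chosen independently of $\theta$. You are also more forthright about the hidden hypothesis: both arguments implicitly need eventual monotonicity of $f_X$ in the tails (the paper's integral test requires it just as your endpoint-supremum comparison does), and you correctly flag that the stated conditions---continuous, bounded, decaying to zero---do not by themselves force $\sum_m M_m<\infty$. Your suggested fallback (bounding $M_m$ by the average over $\overline{I_m}$ plus an oscillation term) is a reasonable way to relax the monotonicity assumption if needed.
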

\begin{proof}
We know the CDF $f_{X_{\lambda}}$ is
 \begin{align}
    \label{eq:cdf_X_lambda_app}
        F_{X_{a,\lambda}}(\theta) = \sum_{m=-\infty}^{\infty} \left[{F}_{X}\left(\frac{2m\lambda}{a} + \frac{\theta}{a}\right) - {F}_{X}\left((2m-1)\frac{\lambda}{a}\right)\right].
    \end{align}
for $\theta \in [-\lambda,\lambda)$
Our result is trivial if
\begin{align}
    \label{eq:cdf_X_lambda_differentiated}
    \frac{\mathrm{d}}{\mathrm{d}\theta} \left( \sum_{m=-\infty}^{\infty} 
    \left[{F}_{X}\left(\frac{2m\lambda}{a} + \frac{\theta}{a}\right) - {F}_{X}\left((2m-1)\frac{\lambda}{a}\right)\right]\right) = \sum_{m=-\infty}^{\infty} \frac{\mathrm{d}}{\mathrm{d}\theta} \left[{F}_{X}\left(\frac{2m\lambda}{a} + \frac{\theta}{a}\right) - {F}_{X}\left((2m-1)\frac{\lambda}{a}\right)\right].
\end{align}
To show this, we employ the theorem for interchanging summation and differentiation as stated in \cite[Thm 7.17]{rudin1964principles}. For completeness, we reproduced the theorem as follows.
\begin{theorem}[\cite{rudin1964principles} ]
    Suppose $\{g_n\}$ is a sequence of functions, each differentiable on $[a,b]$, and $\{g_n(x_0)\}$ converges for some $x_0 \in [a,b]$. If the sequence of derivatives $\{g^{\prime}_n\}$ converges uniformly on $[a,b]$, then $\{g_n\}$ converges uniformly on $[a,b]$ to a function $g$, and 
\begin{align}
    g^{\prime}(x) = \lim_{n \rightarrow \infty}g^{\prime}_n(x) \quad \text{for all } x \in [a,b].
\end{align}
\end{theorem}
To apply this theorem to our case, we define the sequence of partial sums as
\begin{align}
    g_n(\theta) = \sum_{m=-n}^{n} \left({F}_{X}\left(\frac{2m\lambda}{a} + \frac{\theta}{a}\right) - {F}_{X}\left((2m-1)\frac{\lambda}{a}\right)\right). \label{eq:gn}
\end{align}
Then, for the interchange of the derivative and the summation, the following conditions have to hold.
\begin{enumerate}
    \item Convergence of the sum: $\lim_{n \rightarrow \infty} g_n(\theta) = g(\theta)$.
    \item Existence of the derivative: $g_n^{\prime}(\theta)$ should exist for all $n$.
     \item Uniform convergence of the derivative: $\lim_{n \rightarrow \infty} g_n^{\prime}(\theta) = g(\theta)^{\prime}$.
\end{enumerate}

From the definition of the CDF of the modulo variable, the first condition is true. Next, since $f_X$ is continuous and bounded, the corresponding CDF, $F_X$, is differentiable. Hence, we have that
\begin{align}
    g^{\prime}_n(\theta) = \frac{1}{a}\sum_{m=-n}^{n} f_X \left( \frac{2m\lambda}{a} + \frac{\theta}{a} \right).
\end{align}
Next, we show that the series $g_n^{\prime}(\theta)$ converges uniformly.
To this end, we apply the following lemma to demonstrate the uniform convergence of the series of derivatives.
\begin{lemma}
  If the PDF $f_X(x)$ is continuous and approaches zero as $|x| \to \infty$ (as is the case for distributions like the normal distribution or others with fast-decaying tails), then the series of derivatives  $g^{\prime}_n(\theta)$, given by
  \begin{align}
    \label{eq:DUTSS_derivative_series}
      g^{\prime}_n(\theta) = \frac{1}{a}\sum_{m=-n}^{n} f_X \left( \frac{2m\lambda}{a} + \frac{\theta}{a} \right),
  \end{align}
  converges uniformly.
\end{lemma}

\begin{proof}
To show uniform convergence, we apply the Cauchy criterion for uniform convergence (\cite[Thm 7.8]{rudin1964principles}). Specifically, we need to show that for every \( \epsilon > 0 \), there exists an \( N \) such that for all \( n, n' \geq N \) and for all \( \theta \)
    \begin{align}
    \left|\ g^{\prime}_n(\theta) - g^{\prime}_{n'}(\theta) \right| < \epsilon.    
    \end{align}
    Consider
    \begin{align}
        \left| g^{\prime}_n(\theta) - g^{\prime}_{n'}(\theta) \right| = \left| \frac{1}{a} \sum_{m=-n}^{n} f_X\left( \frac{2m\lambda}{a} + \frac{\theta}{a} \right) - \frac{1}{a} \sum_{m=-n'}^{n'} f_X\left( \frac{2m\lambda}{a} + \frac{\theta}{a} \right) \right|.
    \end{align}
    Without loss of generality, assume $ n' > n $. Then:
    \begin{align}
    \label{eq:gn_prime_diff_n_n'}
    \left| g^{\prime}_n(\theta) - g^{\prime}_{n'}(\theta) \right| = \frac{1}{a} \left| \sum_{m=n+1}^{n'} f_X\left( \frac{2m\lambda}{a} + \frac{\theta}{a} \right) + \sum_{m=-n'}^{-n-1} f_X\left( \frac{2m\lambda}{a} + \frac{\theta}{a} \right) \right|.
    \end{align}
Define $h(\tau) = f_X\left( \frac{2\tau\lambda}{a} + \frac{\theta}{a} \right)$. Then, we have
\begin{align}
\int_{-\infty}^{\infty} h(\tau) \, \mathrm{d}\tau = \int_{-\infty}^{\infty} f_X\left( \frac{2\tau\lambda}{a} + \frac{\theta}{a} \right) \, \mathrm{d}\tau = \frac{a}{2\lambda}
 \implies \int_{n+1}^{\infty} h(\tau) \, \mathrm{d}\tau < \frac{a}{2\lambda}.
\end{align}
Since $\int_{n+1}^{\infty} h(\tau) \, \mathrm{d}\tau$ converges to a value within $\left[0, \frac{a}{2\lambda}\right]$, it satisfies the integral test for convergence. Therefore, the series $\sum_{m=n+1}^{\infty} h(m)$ also converges. 
Thus, we have
\begin{align}
\sum_{m=n+1}^{\infty} f_X\left( \frac{2m\lambda}{a} + \frac{\theta}{a} \right) = k
\end{align} where $k>0$. As $n \to \infty$, $\sum_{m=n+1}^{\infty} f_X\left( \frac{2m\lambda}{a} + \frac{\theta}{a} \right) \to 0.$
Therefore, for some $N$, we have
\begin{align}
\label{eq:sample_sum_upper}
\sum_{m=n+1}^{\infty} f_X\left( \frac{2m\lambda}{a} + \frac{\theta}{a} \right) < \frac{a\epsilon}{2} \quad \forall\;n > N.
\end{align}
Similarly, it can be shown that 
\begin{align}
\label{eq:sample_sum_lower}
\sum_{m=-\infty}^{-n-1} f_X\left( \frac{2m\lambda}{a} + \frac{\theta}{a} \right) < \frac{a\epsilon}{2} \quad \forall\;n > N.
\end{align}
    Using \eqref{eq:sample_sum_lower} and \eqref{eq:sample_sum_upper} in \eqref{eq:gn_prime_diff_n_n'} we have,
    \begin{align}
         \left| g^{\prime}_n(\theta) - g^{\prime}_{n'}(\theta) \right| < \epsilon.
    \end{align}
    This shows that \( g^{\prime}_n(\theta) \) satisfies the Cauchy criterion, and therefore, the series of derivatives $g_n^{\prime}$ converges uniformly.
\end{proof}
    We have shown that the conditions for interchanging differentiation and summation are met, and thus we have

   \begin{align}
      g^{\prime}(\theta) &= \lim_{n \to \infty}\:g^{\prime}_n(\theta)\\
      \frac{\mathrm{d}}{\mathrm{d}\theta} F_{X_{a,\lambda}}(\theta) &= \lim_{n \to \infty} \frac{1}{a}\sum_{m=-n}^{n} f_X \left( \frac{2m\lambda}{a} + \frac{\theta}{a} \right) \\
      \frac{\mathrm{d}}{\mathrm{d}\theta} F_{X_{a,\lambda}}(\theta) &= \frac{1}{a} \sum_{m=-\infty}^{\infty} f_X \left( \frac{2m\lambda}{a} + \frac{\theta}{a} \right) 
   \end{align}

\end{proof}

\section{Second Wasserstein Distances}
\label{app:SWD}
In this section, we will either calculate or present the precomputed Second Wasserstein Distance between several commonly used distributions. The Second Wasserstein Distance between two random variables $X$ and $Y$ is given by:

\begin{align}
    W_2(X, Y) = \left( \int_0^1 \left( F_X^{-1}(u) - F_Y^{-1}(u) \right)^2 \, du \right)^{\frac{1}{2}},
\end{align}

where $F_X^{-1}$ and $F_Y^{-1}$ denote the inverse CDFs of $X$ and $Y$, respectively. The expressions for the distances for some distributions are given in certain textbooks; here, we derived them for completeness.

\subsection{Gaussian Distributions}
The second Wasserstein distance between Gaussian distributions has an analytical solution, as detailed in \cite{givens1984class}. The general solution is as follows Let \( X_1 \) and \( X_2 \) be two Gaussian distributions with mean vectors \( \bm{\mu_1} \) and \( \bm{\mu_2} \), and non-singular covariance matrices \( \Sigma_1 \) and \( \Sigma_2 \), respectively. The \( L^2 \) Wasserstein distance \( W_2(X_1, X_2) \) is given by

\begin{align}
    \label{eq:general_gaussian_wasserstein}
    \sqrt{\|\bm{\mu_1}-\bm{\mu_2}\|^2 + \text{tr}(\Sigma_1 + \Sigma_2 - 2\sqrt{\Sigma_1}\,\Sigma_2\sqrt{\Sigma_1})}.
\end{align}

For one-dimensional distributions, where \( \Sigma_1 = \sigma_1^2 \) (with \( \sigma_1 \) being the standard deviation of \( X_1 \)) and similarly \( \Sigma_2 = \sigma_2^2 \) for \( X_2 \), the equation simplifies to

\begin{align}
    W_2(X_1, X_2) = \sqrt{(\mu_2 - \mu_1)^2 + (\sigma_2 - \sigma_1)^2}.
\end{align}

\subsection{Exponential and Uniform Distributions}

Let \( X \) be a random variable following an exponential distribution \(\exp(p)\) and \( Y \) be a random variable with a uniform distribution over \([0, C]\). Their CDFs are given as
\begin{align}
    F_X(x) = \begin{cases}
      1 - e^{-px}, & x \geq 0, \\
      0, & x < 0,
    \end{cases} \quad \text{and} \quad
    F_Y(y) = \begin{cases}
      1, & y > C, \\
      \frac{y}{C}, & y \in [0, C], \\
      0, & y < 0,
    \end{cases}
\end{align}
respectively.  Their inverse CDFs, or quantile functions, are given as
\begin{align}
    F_X^{-1}(u) = -\frac{\ln(1-u)}{p} \quad \text{and} \quad
    F_Y^{-1}(u) = Cu,
\end{align}
respectively.
By using these, the second Wasserstein distance between \( X \) and \( Y \) is then calculated as
\begin{align}
    W_2(X,Y)^2 &= \int_0^1 \left| F_X^{-1}(u) - F_Y^{-1}(u) \right|^2 du \nonumber\\
    &= \int_0^1 \left| -\frac{ln(1-u)}{p} - Cu \right|^2 du \nonumber\\
    &= \int_0^1 \left| -\frac{ln(u)}{p} - C(1-u) \right|^2 du \nonumber\\
    &= \int_0^1 \left( \left( \frac{ln(u)}{p} \right)^2 + 2C\frac{u\:ln(1-u)}{p}+ C^2(1-u)^2 \right) du \nonumber\\
    W_2(X,Y) &= \sqrt{\frac{2}{p^2} - \frac{3C}{2p}+ \frac{C^2}{3}}.
\end{align} 

\subsection{Exponential Distributions}
Consider two random variables \( X \) and \( Y \), each following an exponential distribution, \(\exp(p_1)\) and \(\exp(p_2)\) respectively. The inverse CDFs, or quantile functions, for these distributions, respectively, are given by
\begin{align}
    F_X^{-1}(u) = -\frac{\ln(1-u)}{p_1}, \quad \text{and} \quad
    F_Y^{-1}(u) = -\frac{\ln(1-u)}{p_2}.
\end{align}
The second Wasserstein distance between \( X \) and \( Y \) is calculated as 
\begin{align}
    W_2(X,Y)^2 &= \int_0^1 \left| F_X^{-1}(u) - F_Y^{-1}(u) \right|^2 du \\
    &= \int_0^1 \left| \frac{\ln(1-u)}{p_1} - \frac{\ln(1-u)}{p_2} \right|^2 du \\
    &= \left| \frac{1}{p_1} - \frac{1}{p_2} \right|^2 \int_0^1 \ln(1-u)^2 \, du \\
    &= 2 \left| \frac{1}{p_1} - \frac{1}{p_2} \right|^2\\
    W_2(X,Y) &= \sqrt{2} \left| \frac{1}{p_1} - \frac{1}{p_2} \right|.
\end{align}

\section{Probability Distribution after Modulo Folding}

We define the modulo folding \(x_{a,\lambda}\) of a signal \(x\) as:
\begin{align}
    x_{a,\lambda}[n] = \mathcal{M}_{a,\lambda}(x[n])
     = (a \,x[n]+\lambda)\,\, \text{mod}\,\, 2\lambda -\lambda.
    \label{eq:scale_modulo_copy}
\end{align}

For the random variable $X$ with PDF $f_X$ and CDF $F_X$, let the PDF of the folded random variable $X_{a, \lambda}$ be denoted as $f_{X_{a,\lambda}}$. Then the CDF $F_{X_{a, \lambda}}$ is determined as

\begin{align}
    \label{eq:cdf_modulo_folded}
    F_{X_{a, \lambda}} (\theta) = \sum_{m=-\infty}^{\infty} 
    {F}_{X}\left( \frac{2m\lambda}{a} + \frac{\theta}{a} \right) - {F}_{X}\left((2m-1)\frac{\lambda}{a}\right).
\end{align}

If $f_X(x)$ is continuous, approaches zero as $|x| \rightarrow \infty$, and is bounded, then the PDF of $X_{a,\lambda}$ is given as

\begin{align}
    \label{eq:pdf_modulo_folded}
        f_{X_{a, \lambda}} (\theta) = \frac{1}{a} \sum_{m=-\infty}^{\infty} 
    {f}_{X}\left(\frac{2m\lambda}{a} + \frac{\theta}{a}\right).
\end{align}
Next, we determine the distribution for the folded variable when $X$ follows a Gaussian distribution.
\subsection{Gaussian Distribution}
\label{subsec:modulo_folded_gaussian_dist}

For a Gaussian random variable with mean \( \mu \) and standard deviation \( \sigma \), the PDF \( f_X(x) \) is expressed as
\begin{align}
    f_X(x) = \frac{1}{\sqrt{2 \pi \sigma^2}} e^{-\frac{(x-\mu)^2}{2\sigma^2}}.
\end{align}
Now, consider the modulo folded signal $X_{a, \lambda}$, which has a probability density function $f_{X_{a,\lambda}}$. This function can be represented as:
\begin{align}
    \label{eq:gaussian_fx_sum}
    f_{X_{a,\lambda}}(\theta) = \frac{1}{a} \sum_{m=-\infty}^{\infty} 
    {f}_{X}\left(\frac{2m\lambda}{a} + \frac{\theta}{a}\right) = \sum_{m=-\infty}^{\infty}\frac{1}{\sqrt{2 \pi \sigma^2}} e^{-\frac{(2m\lambda + \theta - a\mu)^2}{2(a\sigma)^2}}.
\end{align}
Since $X_{a, \lambda} \in [-\lambda, \lambda]$, to show that the density of the folded variable tends to uniform distribution for large $a$, we need to prove that  $f_{X_{a,\lambda}}(\theta) \rightarrow \frac{1}{2\lambda}$. Since the summation in \eqref{eq:gaussian_fx_sum} can not be simplified further, we consider an alternative approach, as discussed next.


We first define a variable $w_m$ as 
\begin{align}
    w_m = \frac{2m\lambda + \theta - a\mu}{a\sigma}.
\end{align}
Note that the difference between consecutive $w_m$s, denoted $\Delta w_m$, is given by\begin{align}
    \Delta w_m = w_m - w_{m-1} = \frac{2\lambda}{a\sigma}.
\end{align}
Substituting $w_m$ in \eqref{eq:gaussian_fx_sum}, the summation is given as
\begin{align}
\label{eq:Gaussian_Reimman_Sum}
     \frac{1}{a} \sum_{m=-\infty}^{\infty} \frac{1}{\sqrt{2 \pi \sigma^2}} e^{-\frac{w_m^2}{2}}  = \frac{1}{2\sqrt{2\pi}\lambda} \sum_{m=-\infty}^{\infty} e^{-\frac{w_m^2}{2}}\, \Delta w_m.
\end{align}
As $a\rightarrow \infty$, $\Delta w_m$ tends to zero. Hence, in the limit, the sum is transformed into a Riemann integral as
\begin{align}
    \label{eq:Gaussian_Reimman_Integral}
    \lim_{\Delta w_m \rightarrow 0}\frac{1}{2\sqrt{2\pi}\lambda} \sum_{m=-\infty}^{\infty} e^{-\frac{w_m^2}{2}}\, \Delta w_m &= \frac{1}{2\sqrt{2\pi}\lambda} \int_{-\infty}^{\infty} e^{-\frac{u^2}{2}} \, du = \frac{1}{2\lambda}.
\end{align}
This shows that for small $\frac{\lambda}{a\sigma}$, the probability distribution of $ f_{X_{a,\lambda}} $  approaches a uniform distribution on modulo folding.
\subsection{Exponential Distribution}
\label{subsec:modulo_folded_exponential_dist}

Consider an exponential random variable $X$ with a rate parameter $p$. The PDF of $X$, denoted $f_X(x)$ is given by
\begin{align}
    \label{eq:pdf_exponential}
    f_X(x) = \begin{cases}
        pe^{-px}, & x \geq 0, \\
        0, & x < 0,
    \end{cases}
\end{align}
and its CDF $F_X(x)$ is expressed as
\begin{align}
    \label{eq:cdf_exponential}
    F_X(x) = \begin{cases}
        1 - e^{-px}, & x \geq 0, \\
        0, & x < 0.
    \end{cases}
\end{align}
Now, consider the modulo folded signal $ {X_{a,\lambda}} $, which has a CDF denoted by $ F_{X_{a,\lambda}}(\theta) $. From \eqref{eq:cdf_modulo_folded}, we have:
\begin{align}
    \label{eq:cdf_X_lambda_exp_definition}
        F_{X_{a, \lambda}} (\theta) = \sum_{m=-\infty}^{\infty} 
    {F}_{X}\left( \frac{2m\lambda}{a} + \frac{\theta}{a} \right) - {F}_{X}\left((2m-1)\frac{\lambda}{a}\right)
\end{align} for $\theta \in [-\lambda,\lambda)$.
Since the CDF $F_X(x)$ depends significantly on whether $x > 0$ or not, we need to analyze the signs of $\left( \frac{2m\lambda}{a} + \frac{\theta}{a} \right)$ and $(2m-1)\frac{\lambda}{a}$. In particular, we consider two cases, $\theta \geq 0$ and $\theta <0$, to determine $F_{X_{a, \lambda}} (\theta)$.
\textbf{Case 1: $\theta < 0$}\\
When $\theta < 0$, $\left( \frac{2m\lambda}{a} + \frac{\theta}{a} \right) > 0$ for $m > 0$, and $(2m-1)\frac{\lambda}{a} > 0$ for $m > 0$. Thus, equation \eqref{eq:cdf_X_lambda_exp_definition} simplifies to
    \begin{align}
        F_{X_{a, \lambda}} (\theta) &= \sum_{m=1}^{\infty} \left(
    {F}_{X}\left( \frac{2m\lambda}{a} + \frac{\theta}{a} \right) - {F}_{X}\left((2m-1)\frac{\lambda}{a}\right) \right)\\
          &= \sum_{m=1}^{\infty} 
        \left( (1-e^{-\frac{p}{a}(2m\lambda + \theta)}) - (1-e^{-\frac{p}{a}(2m-1)\lambda}) \right) \nonumber \\
        &= \frac{e^{\frac{p}{a}\lambda} - e^{-\frac{p}{a}\theta}}{e^{2\frac{p}{a}\lambda}-1}.  \label{eq:exp_sum_simplified}
    \end{align}
\textbf{Case 2: $\theta \geq 0$}\\
When $\theta \geq 0$, $\left( \frac{2m\lambda}{a} + \frac{\theta}{a} \right) > 0$ for $m \geq 0$, and $(2m-1)\frac{\lambda}{a} > 0$ for $m > 0$. Thus, equation \eqref{eq:cdf_X_lambda_exp_definition} simplifies to
\begin{align}
        F_{X_{a, \lambda}} (\theta) &= \sum_{m=1}^{\infty} \left(
    {F}_{X}\left( \frac{2m\lambda}{a} + \frac{\theta}{a} \right) - {F}_{X}\left((2m-1)\frac{\lambda}{a}\right) \right) + F_X\left(\frac{\theta}{a}\right) = \left( \frac{e^{\frac{p}{a}\lambda} - e^{-\frac{p}{a}\theta}}{e^{2\frac{p}{a}\lambda}-1} \right) + 1-e^{-\frac{p}{a}\theta}. \nonumber
\end{align}
Thus, the CDF of the exponential random variable after modulo folding is given by 
\begin{align}
    \label{cdf_modulofolded_exponential}
    F_{X_{a, \lambda}} (\theta) = \begin{cases}
        \frac{e^{\frac{p}{a}\lambda} - e^{-\frac{p}{a}\theta}}{e^{2\frac{p}{a}\lambda}-1} + 1 - e^{-\frac{p}{a}\theta}, & \theta \in [0,\lambda),\\
        \frac{e^{\frac{p}{a}\lambda} - e^{-\frac{p}{a}\theta}}{e^{2\frac{p}{a}\lambda}-1}, & \theta \in [-\lambda,0).
    \end{cases}
\end{align}
Since $ {X_{a, \lambda}} \in [-\lambda,\lambda)$, to show that its distribution tends to uniform distribution for large $a$, we need to show that $ F_{X_{a, \lambda}}(\theta) \rightarrow \frac{\theta + \lambda}{2\lambda}$.
Using the approximation $e^{x} \approx 1 + x$  for very small  $x$, \eqref{cdf_modulofolded_exponential} becomes 
\begin{align}
    F_{X_{a, \lambda}} (\theta)  = \frac{\theta + \lambda}{2\lambda}, \quad \theta \in [-\lambda, \lambda).
\end{align}



Therefore, when $\frac{p}{a}$ is small, a random variable following an exponential distribution converges to a uniform distribution in distribution after modulo folding.

\subsection{Uniform Distribution}
\label{subsec:modulo_folded_uniform_dist}
Consider a uniform random variable $X$ defined over the interval $[\theta_1, \theta_2]$. The probability density function $f_X$ for this random variable is expressed as 
\begin{align}
    \label{eq:f_X_uniform_def}
    f_X(\theta) = \begin{cases}
        \frac{1}{\theta_2 - \theta_1} = \beta, & \theta \in [\theta_1, \theta_2],\\
        0, & \text{otherwise.}
    \end{cases}
\end{align}

We denote the PDF of the modulo-folded version as $f_{X_{a, \lambda}}$.
If $\theta_1 = -\theta_2$, selecting $a =  \frac{\lambda}{\theta_2}$ yields the desired uniform distribution $\mathcal{U}[-\lambda, \lambda]$. However, the distribution of $X$ is not known, and scaling is not possible. We show that for large values of $a$, the distribution $f_{X_{a, \lambda}}$ tends to $\mathcal{U}[-\lambda, \lambda]$.
We define the following parameters that are used in the subsequent derivations:
\begin{align}
    \theta_{1\lambda} = (a\theta_1 + \lambda) \,\, \text{mod} \,\, 2\lambda - \lambda, \quad \text{and} \quad
    \theta_{2\lambda} = (a\theta_2 + \lambda) \,\, \text{mod} \,\, 2\lambda - \lambda.
\end{align}
From Equation \eqref{eq:pdf_X_mod}, the PDF of the modulo-folded distribution is given by
\begin{align}
    f_{X_{a,\lambda}}(\theta) = \frac{1}{a}\sum_{m=-\infty}^{\infty} f_{X}\left(\frac{2m\lambda}{a} + \frac{\theta}{a}\right).
\end{align}
\begin{figure}[!t]
    \centering
    \subfigure[$f_X(\theta)$]{\includegraphics[width = 3.2 in]{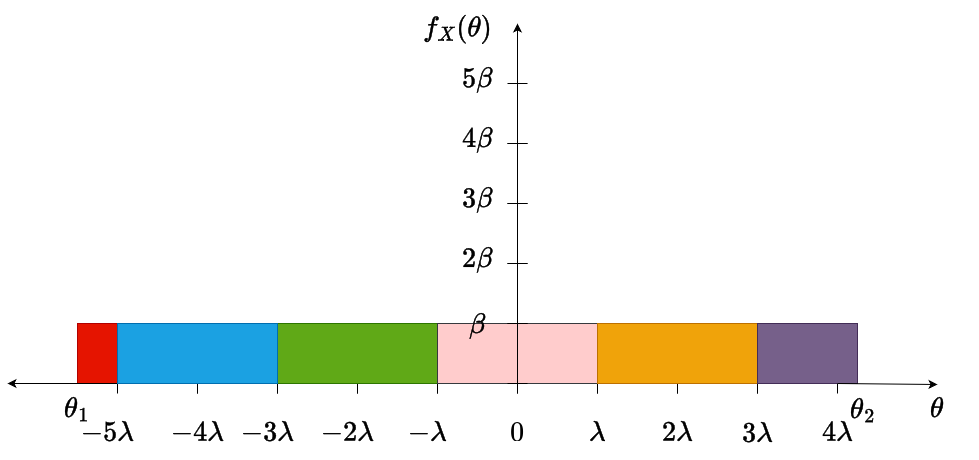}
    \label{fig:uniform_premodulo}
    }
    \subfigure[$f_{X_{1, \lambda}}$]{\includegraphics[width = 3.2 in]{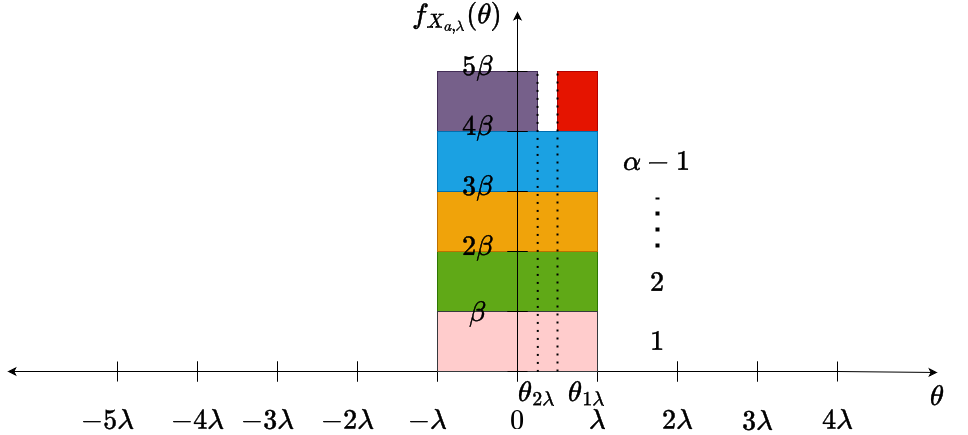}
           \label{fig:uniform_postmodulo}
           }
    \caption{Distribution (PDF) transformation of $X$ under modulo folding with $a = 1$}
    \label{fig:distibution_change_uniform}
\end{figure}
Fig.~\ref{fig:distibution_change_uniform} illustrates the transformation of a uniform random variable $f_X$ to $f_{X_{a,\lambda}}$ upon modulo-folding for $a=1$. The distribution of the folded variable is derived from the contributions of three primary components from the original distribution:

\begin{enumerate}
    \item The leftmost segment starting from $\theta_1$, which is mapped from $\theta_{1\lambda}$ to $\lambda$.
    \item The rightmost segment extending to $\theta_2$, which is mapped from $-\lambda$ to $\theta_{2\lambda}$.
    \item All central segments, which effectively add a constant value over the range $[-\lambda,\lambda)$.
\end{enumerate}
We will now demonstrate this through detailed calculations.\\
Since $\theta \in [-\lambda, \lambda)$ and $f_X(\theta)$ is supported over $[\theta_1, \theta_2]$, only a finite terms in the summation will contribute to $ f_{X_{a,\lambda}}(\theta)$. In particular, we have that
\begin{align}
    f_{X_{a,\lambda}}(\theta) = \frac{1}{a}\sum_{m=m_1}^{m_2} f_{X}\left(\frac{2m\lambda}{a} + \frac{\theta}{a}\right), 
\end{align}
where 
\begin{align}
    m_1 = \left\lfloor \frac{a\theta_1 + \lambda}{2\lambda} \right\rfloor \quad \text{and} \quad  m_2 = \left\lfloor \frac{a\theta_2 + \lambda}{2\lambda} \right\rfloor.
\end{align}
To establish a relationship between $\theta_1$, $\theta_{1\lambda}$, and $m_1$, we express $a\theta_1 + \lambda$ as the product of $2\lambda$ and its quotient, plus the remainder (Division Algorithm). Since $\lambda > 0$, we have
\begin{align}
    a\theta_1 + \lambda &= 2\lambda \left\lfloor \frac{a\theta_1 + \lambda}{2\lambda} \right\rfloor +   (a\theta_1 + \lambda) \,\, \text{mod} \,\, 2\lambda\\
   \implies  a\theta_1 + \lambda &= 2\lambda m_1 +   \theta_{1\lambda}+\lambda \nonumber\\ 
    \label{eq:theta_1_m_1}
    a\theta_1 &= 2\lambda m_1 +   \theta_{1\lambda}.
\end{align}
Similarly, we have
\begin{align}
    \label{eq:theta_2_m_2}
    a\theta_2 = 2\lambda m_2 +   \theta_{2\lambda}.
\end{align}
The indicator function is defined as
\begin{align}
    \label{eq:g_ab_definition_appendix}
    {\mathbbm{1}}_{A}(x) = \begin{cases}
        1, & \quad x \in A, \\
        0, & \quad \text{otherwise}.
    \end{cases}
\end{align}
When $m_1 = m_2 = m^{\prime}$, $f_{X_{a,\lambda}}(\theta)$ is trivial and is given by
\begin{align}
    \label{eq:f_X_uniform_m1_m2_equal_def}
    f_{X_{a,\lambda}}(\theta) &= \frac{1}{a}f_{X}\left(\frac{2m^{\prime}\lambda}{a} + \frac{\theta}{a}\right).
\end{align}
Using \eqref{eq:theta_1_m_1} and \eqref{eq:theta_2_m_2} we have
\begin{align}
    \frac{2m^{\prime}\lambda + \theta}{a} = \theta_1 + \frac{\theta -\theta_{1\lambda}}{a} = \theta_2 + \frac{\theta -\theta_{2\lambda} }{a},
\end{align}
which we use to find $\theta$ for which $\frac{2m^{\prime}\lambda + \theta}{a} \in [\theta_1,\theta_2]$. Thus, we have
\begin{align}
    \frac{2m^{\prime}\lambda + \theta}{a} \geq \theta_1 \iff \theta \geq \theta_{1\lambda} &\quad \text{and} \quad \frac{2m^{\prime}\lambda + \theta}{a} \leq \theta_2 \iff \theta  \leq \theta_{2\lambda} \nonumber\\
    \implies \frac{2m^{\prime}\lambda + \theta}{a} \in [\theta_1,\theta_2] &\iff \theta \in [\theta_{1\lambda},\theta_{2\lambda}].
\end{align}
Using this in \eqref{eq:f_X_uniform_def} and \eqref{eq:f_X_uniform_m1_m2_equal_def} we have
\begin{align}
    \label{eq:f_X_uniform_m1_m2_equal}
    f_{X_{a,\lambda}}(\theta) &= \beta {\mathbbm{1}}_{[\theta_{1\lambda},\theta_{2\lambda}]}.
\end{align}
Next, we consider when $m_1 \neq m_2$. We define $\alpha$ to be $m_2 - m_1$ to help calculate $f_{X_{a,\lambda}}(\theta)$ for different $m$-s. 
Lets now analyze the function $f_{X}\left(\frac{2m^{\prime}\lambda}{a} + \frac{\theta}{a}\right)$ :\\
\textbf{Case 1:} $m = m_1$
\begin{align}
f_{X}\left(\frac{2m_1\lambda}{a} + \frac{\theta}{a} \right) = 
\begin{cases}
\beta, & 2 m_1 \lambda + \theta \in [a\theta_1, a\theta_2], \\
0, & \text{otherwise}.
\end{cases}
\end{align}
The condition $2 m_1 \lambda + \theta \in [a\theta_1, a\theta_2]$ can be also expressed as 
\begin{align}
 \theta \in [a\theta_1-2 m_1 \lambda, a\theta_2-2 m_1 \lambda].
\end{align}
Using equation \eqref{eq:theta_1_m_1}, we get
\begin{align}
\theta \in [\theta_{1\lambda}, a\theta_2 - 2 m_1 \lambda].
\end{align}
Next, consider the relationship for $\theta_2$
\begin{align}
a\theta_2 &= 2 m_2 \lambda + \theta_{2\lambda} \nonumber\\
\implies a\theta_2 - 2 m_1 \lambda &= 2(m_2 - m_1) \lambda + \theta_{2\lambda} \nonumber\\
a\theta_2 - 2 m_1 \lambda  &= 2 \alpha \lambda + \theta_{2\lambda}.
\end{align}
Knowing \(2 \alpha \lambda \geq 2 \lambda\) and \(\theta_{2\lambda} \geq -\lambda\), we can infer
\begin{align}
a\theta_2 - 2 m_1 \lambda \geq 2 \lambda - \lambda = \lambda.
\end{align}
We know $\theta \in [-\lambda, \lambda)$, this and the condition $\theta \in [\theta_{1\lambda}, a\theta_2 - 2 m_1 \lambda]$ implies
\begin{align}
 2 m_1 \lambda + \theta \in [a\theta_1, a\theta_2] \iff \theta \in [\theta_{1\lambda}, \lambda).
\end{align}
Thus, the function $f_{X}\left(\frac{2m_1\lambda}{a}+\frac{\theta}{a}\right)$ simplifies to
\begin{align}
f_{X}\left(\frac{2m_1\lambda}{a}+\frac{\theta}{a}\right) &= 
\begin{cases}
\beta, &  \theta \in [\theta_{1\lambda}, \lambda), \\
0, & \text{otherwise},
\end{cases}\\
&= \beta {\mathbbm{1}}_{[\theta_{1\lambda},\lambda)}(\theta).
\end{align}
\textbf{Case 2:} $m = m_2$
\begin{align}
f_{X}\left(\frac{2m_2\lambda}{a} + \frac{\theta}{a} \right) = 
\begin{cases}
\beta & \text{if } 2 m_2 \lambda + \theta \in [a\theta_1, a\theta_2] \\
0 & \text{otherwise}
\end{cases}
\end{align}
The condition $2 m_2 \lambda + \theta \in [a\theta_1, a\theta_2]$ can be also expressed as 
\begin{align}
 \theta \in [a\theta_1-2 m_2 \lambda, a\theta_2-2 m_2 \lambda].
\end{align}
Using equation \eqref{eq:theta_2_m_2}, we obtain
\begin{align}
\theta \in [ a\theta_1 - 2 m_2 \lambda,\theta_{2\lambda}].
\end{align}
Next, consider the following relationship for $\theta_1$
\begin{align}
a\theta_1 &= 2 m_1 \lambda + \theta_{1\lambda}\\
\implies a\theta_1 - 2 m_2 \lambda &= 2(m_1 - m_2) \lambda + \theta_{1\lambda}\nonumber\\
&= -2 \alpha \lambda + \theta_{1\lambda} .
\end{align}
Knowing $-2 \alpha \lambda \leq -2 \lambda$ and $\theta_{2\lambda} \leq \lambda$, we can infer
\begin{align}
a\theta_1 - 2 m_2 \lambda \leq -2 \lambda + \lambda = -\lambda.
\end{align}
We know $\theta \in [-\lambda, \lambda)$, this and the condition $\theta \in [ a\theta_1 - 2 m_2 \lambda,\theta_{2\lambda}]$ implies
\begin{align}
 2 m_1 \lambda + \theta \in [a\theta_1, a\theta_2] \iff \theta \in [- \lambda,\theta_{2\lambda}].
\end{align}
Thus, the function $f_{X}\left(\frac{2m_2\lambda}{a}+\frac{\theta}{a}\right)$ simplifies to
\begin{align}
f_{X}\left(\frac{2m_2\lambda}{a}+\frac{\theta}{a}\right) &= 
\begin{cases}
\beta, & \theta \in [- \lambda,\theta_{2\lambda}], \\
0, & \text{otherwise},
\end{cases}\\
&= \beta {\mathbbm{1}}{[-\lambda,\theta_{2\lambda}]}(\theta).
\end{align}
\textbf{Case 3:} $ m_1 < m < m_2$
\begin{align}
f_{X}\left(\frac{2m\lambda}{a} + \frac{\theta}{a} \right) = 
\begin{cases}
\beta, &  2 m \lambda + \theta \in [a\theta_1, a\theta_2], \\
0, & \text{otherwise}.
\end{cases}
\end{align}
To establish bounds for $2m\lambda+\theta$, we start by analyzing the following inequality
\begin{align}
    m &> m_1 \\
     \implies m &\geq m_1 + 1\nonumber\\
    2m\lambda &\geq 2 m_1 \lambda + 2\lambda.
\end{align}
Substituting for $m_1$ from \eqref{eq:theta_1_m_1}, we have
\begin{align}
    2m\lambda &\geq ( a{\theta_1} - {\theta_{1\lambda}}) + 2\lambda \nonumber\\
    2m\lambda + \theta &\geq  a{\theta_1} + 2\lambda - {\theta_{1\lambda}} + \theta.
\end{align}
Since both $\theta$ and $\theta_{1\lambda}$ lie in $[-\lambda, \lambda)$, we get 
\begin{align}
    \label{eq:lower_bound_2ml_theta}
    2m\lambda + \theta &>  a{\theta_1}
\end{align}
Next, to determine an upper bound for $m$, we consider
\begin{align}
    m &< m_2 \nonumber\\
    \implies m &\leq m_2 - 1 \nonumber\\
    2m\lambda &\leq 2 m_2 \lambda - 2\lambda.
\end{align}
Substituting for $m_2$ from \eqref{eq:theta_2_m_2}, we have
\begin{align}
    2m\lambda &\leq (a{\theta_2}  - {\theta_{2\lambda}}) - 2\lambda \nonumber \\
    2m\lambda + \theta &\leq  a{\theta_2} -2\lambda - {\theta_{2\lambda}} + \theta.
\end{align}
Since both $\theta$ and $\theta_{2\lambda}$ lie in $[-\lambda, \lambda)$, we get \begin{align}
    \label{eq:upper_bound_2ml_theta}
    2m\lambda + \theta &< a\theta_2.
\end{align}
From equations \eqref{eq:lower_bound_2ml_theta} and \eqref{eq:upper_bound_2ml_theta}, we establish that
\begin{align}
    \label{eq:bound_2ml_theta}
    2m\lambda+\theta  \in (a\theta_1,a\theta_2)
\end{align}
Since $2m\lambda+\theta$ is always within the support of $f_{X}\left(\frac{2m\lambda}{a} + \frac{\theta}{a} \right)$ we can say
\begin{align}
    f_{X}\left(\frac{2m\lambda}{a} + \frac{\theta}{a} \right) = \beta \quad \forall \; \theta \in [\-\lambda,\lambda).
\end{align}
Thus, for $m_1 \neq m_2$ we have
\begin{align}
f_{X}\left(\frac{2m\lambda}{a} + \frac{\theta}{a} \right) = 
\begin{cases}
\beta {\mathbbm{1}}_{[\theta_{1\lambda},\lambda)}(\theta), & m = m_1,\\
\beta, & m_1 < m < m_2,\\
\beta {\mathbbm{1}}_{[-\lambda,\theta_{2\lambda})}(\theta),
 & m = m_2.
\end{cases}
\end{align}
The PDF of the modulo-folded signal is then given by
\begin{align}
    f_{X_{a,\lambda}}(\theta) &= \frac{1}{a}\sum_{m=-\infty}^{\infty} f_{X}\left(\frac{2m\lambda}{a} + \frac{\theta}{a}\right) \nonumber\\
    &= \frac{1}{a}\sum_{m=m_1}^{m_2} f_{X}\left(\frac{2m\lambda}{a} + \frac{\theta}{a}\right) \nonumber\\
    &= \frac{1}{a}\left( \beta {\mathbbm{1}}_{[\theta_{1\lambda},\lambda)}(\theta) + (\alpha - 1) \beta + \beta {\mathbbm{1}}_{[-\lambda,\theta_{2\lambda}]}(\theta)  \right) \nonumber\\
    &= \frac{\beta}{a} \left( {\mathbbm{1}}_{[\theta_{1\lambda},\lambda)}(\theta) + {\mathbbm{1}}_{[-\lambda,\theta_{2\lambda}]}(\theta) + \alpha - 1 \right).
\end{align}
To show that the density of the folded variable tends to uniform distribution $\mathcal{U}[-\lambda,\lambda]$ for large $a$, we need to prove that  $f_{X_{a,\lambda}}(\theta) \rightarrow \frac{1}{2\lambda}$. 
We start by finding a bound on $\alpha$ from its definition,
\begin{align}
    \alpha &= m_2 - m_1 \\
    &= \left\lfloor \frac{a\theta_2 + \lambda}{2\lambda} \right\rfloor - \left\lfloor \frac{a\theta_1 + \lambda}{2\lambda} \right\rfloor.
\end{align}
Since $\lfloor x \rfloor \in (x-1,x]$ we can upper and lower bound as 
\begin{align}
    \frac{a\theta_2 - a\theta_1}{2\lambda} - 1 < \alpha <  \frac{a\theta_2 - a\theta_1}{2\lambda} + 1.
\end{align}
For very large values of $\frac{a(\theta_2 - \theta_1)}{\lambda}$ we have large $\alpha$, thus $\alpha$ can be approximated as 
\begin{align}
    \alpha \approx \frac{a(\theta_2 - \theta_1)}{2\lambda}.
\end{align}
Then, the PDF becomes
\begin{align}
    f_{X_{a,\lambda}}(\theta) &=\frac{\beta}{a} \left( {\mathbbm{1}}_{[\theta_{1\lambda},\lambda)}(\theta) + {\mathbbm{1}}_
    {[-\lambda,\theta_{2\lambda}]}(\theta) + \frac{a(\theta_2 - \theta_1)}{2\lambda} - 1 \right).
\end{align}
As $\frac{a(\theta_2 - \theta_1)}{\lambda}$ becomes larger, it dominates all the other terms to give
\begin{align}
    f_{X_{a,\lambda}}(\theta) &=\frac{\beta}{a} \left( \frac{a(\theta_2 - \theta_1)}{2\lambda} \right)\\
    f_{X_{a,\lambda}}(\theta) &=\frac{1}{2\lambda}.
\end{align}
Hence, for sufficiently large $\frac{a(\theta_2 - \theta_1)}{\lambda}$, the modulo-folded random variable ${X_{a,\lambda}}$ converges to a uniform random variable with distribution $\mathcal{U}[-\lambda,\lambda]$ in distribution.

\bibliographystyle{IEEEtran}
\bibliography{refs,refs1,refs2}

\end{document}